\numberwithin{equation}{section}
\newcommand{\CC}{\mathbb{C}}
\newcommand{\PP}{\mathbb{P}}
\newcommand{\RR}{\mathbb{R}}
\newcommand{\ZZ}{\mathbb{Z}}
\newcommand{\cal}{\mathcal}
\def\cH{{\cal H}}
\def\cR{{\cal R}}
\def\cU{{\cal U}}
\def\cW{{\cal W}}
\newcommand{\seq}[2]{ {#1}_1 ,{#1}_2 , \cdots, {#1}_{#2} }
 \DeclareMathOperator{\image}{im}
\DeclareMathOperator{\Gr}{Gr} 
 \DeclareMathOperator{\GL}{GL}
\theoremstyle{plain}
\newtheorem{prop}{Proposition}[section]
\newtheorem{theo}[prop]{Theorem}
\newtheorem{lemm}[prop]{Lemma}
\newtheorem*{ques}{Question}
\newtheorem{rema}[prop]{Remark}
\theoremstyle{definition}
\newtheorem{exam}[prop]{Example}
\def\sta{^{\ast}}
\def\pri{^{\prime}}
\def\sta{^*}
\def\lra{\longrightarrow}
\def\ts{\otimes}
\newcommand{\bra}[1]{\langle{#1}|}
\newcommand{\ket}[1]{|{#1}\rangle}
\def\xy{\ket{x,y}}
\def\bxy{\ket{\bar{x},y}}
\definecolor{ffqqqq}{rgb}{1,0,0}
\definecolor{uuuuuu}{rgb}{0.27,0.27,0.27}
\definecolor{xdxdff}{rgb}{0.49,0.49,1}
\definecolor{qqqqff}{rgb}{0,0,1}
\definecolor{cqcqcq}{rgb}{0.75,0.75,0.75}
\title[Upper bounds for the number of product vectors]{The number of product vectors and their partial conjugates in a pair of spaces}
\author{Joohan Na}
\address{Department of Mathematics, Seoul National University, Seoul 151-747, Korea}
\email{jhna@snu.ac.kr}
\begin{document}

\begin{abstract}
Let $D$ and $E$ be subspaces of the tensor product of the finite-dimensional Hilbert spaces $\CC^m \ts \CC^n$. We show that the number of product vectors in $D$ with their partial conjugates in $E$ is uniformly bounded depending only on $m$ and $n$ whenever it is finite. We also give an upper bound in qubit-qunit case which we expect to be sharp.
\end{abstract}

\maketitle

\section{Introduction}

The notion of quantum entanglement is now a hot research area in fundamental quantum physics and quantum information. Although it is a natural phenomenon of quantum physics, there is no classical counterpart.

A \emph{state} is a density operator on a Hilbert space, which is a positive semi-definite Hermitian operator on the Hilbert space of trace one. A multipartite quantum system can be represented by tensor products of Hilbert spaces, each of which is considered as a single quantum system.

An entangled state appears under a multipartite quantum system. For brevity, we focus only on a state on a finite dimensional bipartite quantum system $\cH_A \ts \cH_B$ in what follows. The simplest example of a state on a bipartite quantum system $\cH_A \ts \cH_B$ is a \emph{product state}, which is a projection operator onto a product vector in the Hilbert space $\cH_A \ts \cH_B$. A \emph{separable state} $\rho$ is defined as follows:
$$\rho = \sum_i p_i \ket{e_i , f_i}\bra{e_i , f_i} ,$$
where $\sum_i p_i =1, p_i \geq 0$ and $\ket{e_i , f_i}\bra{e_i , f_i}$ are product states onto product vectors $\ket{e_i , f_i} = \ket{e_i} \ts \ket{f_i} \in \cH_A \ts \cH_B$.

A state is called \emph{entangled} if it is not separable. In general, it is very hard to know whether given a state is separable or not. This is so called the separability problem, which is known as an NP-hard problem \cite{StrongNP}, \cite{NPHard03}.
 
Choi \cite{choi82} and Peres \cite{SepPeres} formulated the necessary condition for a state to be separable using positivity of its partial transpose: If a state $\rho$ is separable, then its partial transpose $\rho^{\Gamma}$ is positive semi-definite. We call this the PPT criterion.

The Horodecki family \cite{horo96}, \cite{horo_1st_example} proved that the PPT criterion is sufficient to verify separability only for $2 \ts 2$ and $2 \ts 3$ cases and found examples of entangled states with positive partial transpose(PPTES) in $3 \ts 3$ and $2 \ts 4$ cases based on Woronowicz's work \cite{worono76}. The first example of PPT entangled states in $3 \ts 3$ case appeared in \cite{choi82} in a slight different language.

Horodecki \cite{horo_1st_example} also gave us a necessary condition for separability of a PPT state $\rho$ in terms of the range of it and its partial transpose, called the range criterion, which is stated as follows: If a PPT state $\rho$ is separable, then there exists a collection $\{\ket{x_i , y_i}\}$ of product vectors such that $\{\ket{x_i , y_i}\}$ span the range $\cR(\rho)$ of $\rho$ and $\{\ket{\bar{x}_i , y_i}\}$ the range $\cR(\rho^{\Gamma})$ of $\rho^{\Gamma}$. Therefore, the following is a natural question : For given a PPT state $\rho$, how many different product vectors are there in $\cR(\rho)$ with their partial conjugates in $\cR(\rho^{\Gamma})$?

Let $D$ and $E$ be subspaces of $\CC^m \ts \CC^n$ with $\dim D^{\perp}=k, \dim E^{\perp}=l$. In order to answer the above questions, we want to know how many product vectors $\xy$ in $\CC^m \ts \CC^n$ such that
\begin{equation}\label{eq:star}
\xy \in D, \qquad \bxy \in E.
\end{equation}

Kiem \emph{et al.} \cite{kiem2011existence} studied the conditions under which there is a nonzero product vector $\xy$ with \eqref{eq:star}. To begin with, we mention the results and transform them into the form we want.

When $k+l < m+n-2$, there are infinitely many product vectors with \eqref{eq:star}. We explain it via the proof of Theorem 1 and Lemma 2 in \cite{kiem2011existence}. Kraus \emph{et al.} \cite{Sep2xN} proved in the case $m=2$.

On the other hand, when $k+l > m+n-2$, there is no nonzero product vector $\xy$ with \eqref{eq:star} for generic $D$ and $E$. We explain why it holds via the proof of Theorem 3 in \cite{kiem2011existence}. Therefore, it is reasonable that we focus only on the case $k+l = m+n-2$ in what follows.

In this paper, we show that the number of product vectors $\xy$ with \eqref{eq:star} is uniformly bounded depending only $m$ and $n$ for all subspaces $D$ and $E$ in $\CC^m \ts \CC^n$ whenever it is finite. 

Furthermore, when $m=2$, We give an upper bound $k^2+l^2$ for generic $D$ and $E$ which we expect to be sharp. Ha and Kye \cite{hakye12uniquedecomp}, \cite{hakye13counterexam}, \cite{kye13faces} found examples that there are exactly $2$, $5$ and $10$ product vectors for $2 \ts 2$, $2 \ts 3$ and $2 \ts 4$ cases respectively when $k=1$ or $l=1$ cases. These are strong evidences that the upper bound $k^2+l^2$ could be sharp. We also give a criterion whether given a pair of subspaces $D$ and $E$ belongs to the generic case or not.

Throughout this paper, by \emph{generic} we mean that there is an open dense subset of the product of the Grassmannians $\Gr(mn-k,mn) \times \Gr(mn-l,mn)$ in which the statement holds for all pairs $(D,E)$. It could be changed to be `almost all' or `almost surely' in the sense of \cite{ruskai09}, \cite{walgate08generic}.

Kraus \emph{et al.} \cite{horodecki00operational}, \cite{Sep2xN} reduced the separability of PPT states to solving a system of polynomial equations. We estimate the number of common roots of the system of polynomial equations using the techniques in algebraic geometry and topology. The difficulty is that the variables appearing in the polynomial equations are not independent complex variables, but some complex variables and their complex conjugates.

The paper organized as follows. In section 2, we state briefly some notions and properties we will use. In section 3, we prove that the number of product vectors with \eqref{eq:star} is uniformly bounded depending only $m$ and $n$ whenever it is finite. In section 4, we give an upper bound which we expect to be sharp when $m=2$ and examples.

\subsection*{Acknowledgements}
I'm grateful to my thesis advisor Young-Hoon Kiem for his detailed and helpful comments. I also wish to express my sincere gratitude to Seung-Hyeok Kye for his many suggestions and invaluable much advice for this subject.

\subsection*{Notations and Conventions}
We will work over $\CC$ unless otherwise stated. A Grassmannian $\Gr(k,n)$ is the set of all subspaces of $\CC^n$ of dimension $k$. It is a smooth complex projective variety of dimension $k(n-k)$. For a vector space $V$, the projective space $\PP(V)$ is the set of all lines in $V$ through the origin. We write $\PP^n$ for $\PP(\CC^{n+1})$. For a vector $\ket{v} \in V$ and $\ket{w} \in W$, we write $\ket{v, w}$ for $\ket{v} \ts \ket{w} \in V \ts W$ and call them a product vector of $\ket{v}$ and $\ket{w}$. We define $\bra{v, w} \in V\sta \ts W\sta$ in the same way. A vector always means a vector with unit norm unless it is a zero vector. So, by different vectors we mean they are different each other when we identify vectors up to a nonzero constant. 

\section{Preliminaries}

In this section, we briefly introduce some notions and theorems that we will use. See \cite{UAG05}, \cite{gelfand_disc} for more details.

\subsection{Resultants}
Let $F$ be a field. Let $f = \sum_{i=0}^m a_i x^i$ and $g = \sum_{i=0}^n b_i x^i$ be polynomials in the polynomial ring $F[x]$ of degree $m$ and $n$ respectively. Suppose $f$ and $g$ have a nontrivial common factor. Then it is easy to see that there are $u$ and $v$ in $F[x]$ such that the degree of $u$ (resp. $v$) is less than $n$ (resp. $m$) and $uf+vg=0$. It implies that the following linear map of $F$-vector spaces
$$F[x]_{n-1} \times F[x]_{m-1} \lra F[x]_{m+n-1}, \qquad \qquad (u,v) \mapsto uf+vg$$
is not surjective, where $F[x]_d$ denotes the set of all polynomials in $x$ of degree less than or equal to $d$. The matrix of this linear map with respect to the monomial bases is

$$Syl(f,g) := \begin{pmatrix}
a_m & a_{m-1} & \cdots & a_0 & 0 & \cdots & 0 \\ 0 & a_m & a_{m-1} & \cdots & a_0 & \cdots & 0 \\ \vdots & & \ddots & & & \ddots & \vdots \\ 0 & \cdots & 0 & a_m & a_{m-1} & \cdots & a_0 \\ b_n & b_{n-1} & \cdots & b_0 & 0 & \cdots & 0 \\ 0 & b_n & b_{n-1} & \cdots & b_0 & \cdots & 0 \\ \vdots & & \ddots & & & \ddots & \vdots \\ 0 & \cdots & 0 & b_n & b_{n-1} & \cdots & b_0
\end{pmatrix}_{.}$$

We call this the Sylvester matrix of $f$ and $g$. Since this matrix does not have full rank, its determinant should be zero. We define the \emph{resultant} $Res(f,g)$ to be the determinant of the Sylvester matrix $Syl(f,g)$. We thus have the following.

\begin{theo}
Let $f$ and $g$ be polynomials in $F[x]$. Then $f$ and $g$ have a nontrivial common factor if and only if the resultant $Res(f,g)=0$.
\end{theo}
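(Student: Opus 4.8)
The plan is to read off everything from the linear map already introduced before the statement, namely
$$\Phi \colon F[x]_{n-1} \times F[x]_{m-1} \lra F[x]_{m+n-1}, \qquad (u,v) \mapsto uf+vg,$$
whose matrix in the monomial bases is $Syl(f,g)$ (up to transposition, which is irrelevant for the determinant). Since the source and the target are both $F$-vector spaces of dimension $m+n$, the rank--nullity theorem shows that $\Phi$ is injective if and only if it is surjective, which in turn holds if and only if $\det Syl(f,g) = Res(f,g) \neq 0$. Hence the theorem is equivalent to the assertion that $f$ and $g$ have a nontrivial common factor precisely when $\ker \Phi \neq 0$.

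The forward implication is essentially the remark preceding the statement: if $h$ is a common divisor of positive degree, write $f = h f_1$ and $g = h g_1$; then $(u,v) := (g_1, -f_1)$ satisfies $\deg u < n$, $\deg v < m$ and $uf + vg = 0$, and it is nonzero because $f$ and $g$ are nonzero. So $\ker\Phi \neq 0$. For the converse I would work in the PID $F[x]$: take a nonzero $(u,v) \in \ker\Phi$, so $uf = -vg$ with $\deg u \le n-1$ and $\deg v \le m-1$; assuming $\gcd(f,g) = 1$, the relation $f \mid vg$ forces $f \mid v$, which is impossible unless $v = 0$ since $\deg v < m = \deg f$, whence $uf = 0$ and then $u = 0$ as $f \neq 0$ — contradicting $(u,v)\neq 0$. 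Therefore $\gcd(f,g)$ has positive degree, i.e.\ $f$ and $g$ share a nontrivial common factor.

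I do not expect any genuine obstacle here; this is the classical Sylvester argument, and the two implications above are both short. The only point requiring a little care is the boundary bookkeeping: the statement tacitly assumes $f$ and $g$ are nonzero with $\deg f = m$ and $\deg g = n$ (i.e.\ $a_m, b_n \neq 0$), and when $m = 0$ or $n = 0$ both sides of the equivalence degenerate to ``false'' — a nonzero constant admits no nontrivial factor, while $Res(f,g)$ is then a nonzero power of $a_0$ (resp.\ $b_0$) — so the statement still holds. I would state these conventions explicitly at the outset and then give the two implications in the order above.
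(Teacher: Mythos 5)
Your proof is correct and follows the same route as the paper, which derives the theorem from the linear map $(u,v)\mapsto uf+vg$ and its Sylvester matrix. The paper only spells out the forward implication (common factor $\Rightarrow$ nonzero kernel $\Rightarrow$ vanishing determinant) and leaves the converse implicit; your gcd argument in the PID $F[x]$ supplies exactly the missing half, and your remarks on the nonvanishing of the leading coefficients are a sensible clarification of the paper's tacit conventions.
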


We assume $Res(f,g) \neq 0$ and consider the equation $uf+vg=1$, where $u=\sum_{i=0}^{n-1}c_i x^i$, $v=\sum_{i=0}^{m-1}d_i x^i$. This is equivalent to

$$Syl(f,g) \cdot \begin{pmatrix}
c_0 \\ \vdots \\ c_{n-1} \\ d_0 \\ \vdots \\ d_{m-1}
\end{pmatrix} = \begin{pmatrix}
0 \\ 0 \\ \vdots \\ \vdots \\ 0 \\ 1
\end{pmatrix}_{.}$$

By Cramer's rule, $$c_i = \frac{C_i}{Res(f,g)} \quad \text{ and } \quad d_j = \frac{D_j}{Res(f,g)}_{,}$$
where both $C_i$ and $D_j$ are integer polynomials in the coefficients of $f$ and $g$. Substituting these into $u$ and $v$, we obtain the following.

\begin{theo}
There exist nonzero polynomials $A$ and $B$ such that $Af+Bg= Res(f,g)$.
\end{theo}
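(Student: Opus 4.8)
The statement is essentially the completion of the computation already set up above, so the plan is short. First I would dispose of the degenerate case $Res(f,g)=0$: by the preceding theorem this is exactly the case in which $f$ and $g$ share a nontrivial common factor $h$, say $f=hf_1$ and $g=hg_1$ with $\deg f_1<m$ and $\deg g_1<n$. Both $f_1$ and $g_1$ are nonzero (since $f$ and $g$ are), and $g_1f+(-f_1)g=0=Res(f,g)$, so $A=g_1$ and $B=-f_1$ do the job in this case.

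For the main case $Res(f,g)\neq 0$, I would simply finish the argument sketched above. The equation $uf+vg=1$ with $u=\sum_{i=0}^{n-1}c_ix^i$ and $v=\sum_{j=0}^{m-1}d_jx^j$ is the linear system $Syl(f,g)\cdot(c_0,\dots,c_{n-1},d_0,\dots,d_{m-1})^{T}=(0,\dots,0,1)^{T}$, whose coefficient matrix has determinant $Res(f,g)\neq 0$. Cramer's rule then gives $c_i=C_i/Res(f,g)$ and $d_j=D_j/Res(f,g)$, where $C_i$ and $D_j$ are the determinants of the matrices obtained from $Syl(f,g)$ by replacing the appropriate column with $(0,\dots,0,1)^{T}$; since every entry of such a matrix is one of $a_0,\dots,a_m,b_0,\dots,b_n,0,1$, the $C_i$ and $D_j$ are integer polynomials in the coefficients of $f$ and $g$. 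Setting $A=\sum_{i=0}^{n-1}C_ix^i=Res(f,g)\,u$ and $B=\sum_{j=0}^{m-1}D_jx^j=Res(f,g)\,v$ and multiplying $uf+vg=1$ through by $Res(f,g)$ yields $Af+Bg=Res(f,g)$, with $\deg A<n$ and $\deg B<m$.

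It remains to observe that $A$ and $B$ are nonzero. They cannot both vanish, since then $Res(f,g)=Af+Bg=0$; and if $A=0$ then $Bg=Res(f,g)$ is a nonzero constant, forcing $\deg g=0$. Granting the standing hypothesis that $f$ and $g$ really have positive degrees $m$ and $n$, both polynomials are therefore nonzero. I do not expect a genuine obstacle: the only points that need care are the routine remark that the cofactors $C_i,D_j$ are integer polynomials in the coefficients (immediate from the banded shape of the Sylvester matrix) and the fact that the derivation is purely algebraic, so the identity holds over an arbitrary field $F$ and not merely over $\QQ$; the degenerate low-degree cases are the only place where the word \emph{nonzero} quietly relies on $m,n\geq 1$.
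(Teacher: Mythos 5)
Your proof is correct and follows essentially the same route as the paper: invert the Sylvester system for $uf+vg=1$ via Cramer's rule and clear the denominator $Res(f,g)$. The only additions are your explicit treatment of the degenerate case $Res(f,g)=0$ and the check that $A$ and $B$ are nonzero, which the paper leaves implicit; these are welcome but do not change the argument.
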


Now, let us consider the two variable case. Let $P$ and $Q$ be polynomials in $F[x,y]$ of degree $m$ and $n$ respectively. Then we can write
\begin{equation*}
\begin{split}
& P(x,y) = a_m(x)y^m + a_{m-1}(x)y^{m-1} + \cdots + a_0(x), \\ & Q(x,y) = b_n(x)y^n + b_{n-1}(x)y^{n-1} + \cdots + b_0(x),
\end{split}
\end{equation*}
where $a_i(x)$ and $b_j(x)$ are polynomials in $x$ of degree at most $m-i$ and $n-j$ respectively. If we think of these polynomials as polynomials in $y$ with coefficients in the field $F(x)$, we can define the resultant $Res_y(P,Q)$ of $P$ and $Q$ in the same way as in the one variable case. Since the coefficients of $P$ and $Q$ are polynomials in $x$, so is the resultant $Res_y(P,Q)$. We obtain the following by the same argument as in the one variable case.

\begin{theo}\label{resultant}
Two polynomials $P$ and $Q$ in $F[x,y]$ have a nontrivial common factor if and only if the resultant $Res_y(P,Q)$ is identically zero. Furthermore, there exist nonzero polynomials $R$ and $S$ in $F[x,y]$ such that $RP+SQ= Res_y(P,Q)$
\end{theo}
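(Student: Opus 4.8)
The plan is to deduce the two-variable statement from the one-variable case proved above by working over the field of rational functions $F(x)$, and then to transfer the conclusion back from $F(x)[y]$ to $F[x,y]$ using unique factorization. Throughout one assumes, as in the displayed definition, that $a_m(x)$ and $b_n(x)$ are not identically zero, so that $P$ and $Q$ genuinely have degrees $m$ and $n$ when viewed as polynomials in $y$ over $F(x)$ and their Sylvester matrix over $F(x)$ is the one written down in the text; hence $Res_y(P,Q)$ computed over $F(x)$ is exactly the polynomial in $x$ defined above.

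For the equivalence I would first apply the one-variable resultant criterion over the field $F(x)$: $P$ and $Q$ have a common factor of positive degree in $F(x)[y]$ if and only if $Res_y(P,Q)$ vanishes in $F(x)$, i.e. is identically zero as a polynomial in $x$. It then remains to match ``common factor of positive $y$-degree in $F(x)[y]$'' with ``nontrivial common factor in $F[x,y]$''. Since $F[x]$ is a UFD, so is $F[x,y]$, and Gauss's lemma lets one pass between the two: clearing denominators and removing contents turns a common factor in $F(x)[y]$ into a common factor in $F[x,y]$ of the same $y$-degree, while conversely any common factor in $F[x,y]$ of positive $y$-degree is already one in $F(x)[y]$. (A common divisor of $P$ and $Q$ lying in $F[x]$ alone is invisible to $Res_y$; in the intended applications $P$ and $Q$ are free of such a factor, and this is the only sense in which ``nontrivial'' must be read with care.) Combining the two steps gives the stated ``if and only if''.

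For the ``furthermore'' assertion I would repeat, now over $F(x)$, the Cramer's rule computation used to establish the one-variable Bézout identity: assuming $Res_y(P,Q) \neq 0$, there are $u,v \in F(x)[y]$ with $\deg_y u < n$, $\deg_y v < m$ and $uP + vQ = 1$, and Cramer's rule expresses the coefficients of $u$ and $v$ as $C_i / Res_y(P,Q)$ and $D_j / Res_y(P,Q)$, where $C_i$ and $D_j$ are integer polynomials in the entries of the Sylvester matrix and hence genuine polynomials in $x$. Multiplying the identity through by $Res_y(P,Q)$ and collecting terms, $R := \sum_i C_i(x) y^i$ and $S := \sum_j D_j(x) y^j$ lie in $F[x,y]$ and satisfy $RP + SQ = Res_y(P,Q)$; neither is zero, since $\deg_y P, \deg_y Q \geq 1$ forbids $Q$ (resp.\ $P$) from dividing the $y$-free polynomial $Res_y(P,Q)$. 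The only genuine bookkeeping is in the descent step --- tracking contents under Gauss's lemma and keeping straight the difference between factoring over $F(x)[y]$ and over $F[x,y]$ --- while everything else is a transcription of the one-variable arguments with $F$ replaced by $F(x)$.
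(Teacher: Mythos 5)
Your proposal is correct and follows essentially the same route as the paper, which simply defines $Res_y(P,Q)$ by viewing $P,Q$ as polynomials over the field $F(x)$ and invokes "the same argument as in the one variable case." You have merely filled in the details the paper leaves implicit (Gauss's lemma to move common factors between $F(x)[y]$ and $F[x,y]$, and clearing denominators in the Cramer's-rule Bézout identity), including the correct caveat that a common factor lying in $F[x]$ alone is not detected by $Res_y$.
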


\subsection{Betti Numbers of Real Varieties}
For a given real variety $V$, Milnor \cite{milnor_betti} found out an upper bound for the sum of Betti numbers of $V$. He proved it using Alexander duality and Morse theory.

\begin{theo}[{\cite[Theorem 2]{milnor_betti}}]\label{beti_upper}
Let $V=\{f_1 =f_2 = \cdots = f_r =0\}$ in $\RR^n$. Then the sum of Betti numbers of $V$ is at most $d(2d-1)^{n-1}$ if each polynomials $f_i$ has degree $\leq d$.
\end{theo}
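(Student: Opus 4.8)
The plan is to follow Milnor's strategy: reduce to a single compact nonsingular hypersurface, then bound its Betti numbers by counting the critical points of a generic linear function via B\'{e}zout's theorem, using Morse theory on the associated solid region to save a factor of two.

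First I would replace the system $f_1 = \cdots = f_r = 0$ by the single equation $g := f_1^2 + \cdots + f_r^2 = 0$, so that $V = \{g=0\}$ with $g \ge 0$ and $\deg g \le 2d$. Next, since $V$ is semialgebraic it has the homotopy type of a finite CW complex, and a standard truncation (intersect with a large ball, using the conical structure of semialgebraic sets at infinity, and smooth the corners — or add a high-degree term $\eta\|x\|^{2N}$ and check the Betti numbers do not decrease) lets me assume $V$ is bounded without changing $\sum b_i(V)$. A small perturbation of $g$ within degree $\le 2d$, together with the choice of a small positive regular value $\varepsilon$ of $g$, then makes $H := \{g = \varepsilon\}$ a compact nonsingular hypersurface of degree $\le 2d$ whose solid region $M := \{g \le \varepsilon\}$ deformation retracts onto $V$; so it suffices to bound $\sum b_i(M)$.

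Then I would run Morse theory on $M$ with a generic linear function $\ell(x) = \langle v, x\rangle$. Since $\nabla\ell \equiv v \neq 0$, $\ell$ has no interior critical points, and for generic $v$ the restriction $\ell|_H$ is Morse; at each of its critical points $\nabla\ell$ is normal to $H$, hence points either into or out of $M$. Morse theory for manifolds with boundary yields a CW structure on $M$, up to homotopy, with one cell per boundary critical point of one of these two signs; let $N_-$ and $N_+$ be the numbers of each sign, so $N_+ + N_- = \#\mathrm{Crit}(\ell|_H)$. Applying this to $\ell$ and to $-\ell$ gives $\sum b_i(M) \le N_-$ and $\sum b_i(M) \le N_+$, hence $\sum b_i(M) \le \tfrac12\,\#\mathrm{Crit}(\ell|_H)$. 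Finally, the critical points of $\ell|_H$ are the common solutions of $g = \varepsilon$ and the vanishing of the $2\times 2$ minors of the matrix with rows $v$ and $\nabla g$; generically these cut out, inside the degree-$\le 2d$ hypersurface $\{g=\varepsilon\}$, a system of $n-1$ equations of degree $\le 2d-1$, so B\'{e}zout's theorem bounds the number of complex — hence real — solutions by $2d(2d-1)^{n-1}$. Combining, $\sum b_i(V) = \sum b_i(M) \le \tfrac12\cdot 2d(2d-1)^{n-1} = d(2d-1)^{n-1}$.

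The step I expect to be the main obstacle is the reduction: making precise, with full control of degrees, that one may pass to a compact nonsingular hypersurface whose solid region is homotopy equivalent to $V$. The level set $\{g=\varepsilon\}$ need not be compact and $g$ need not be proper, so the non-compactness of $V$ must be handled carefully — via truncation and corner-smoothing, or via a proper polynomial modification of $g$ that does not lower Betti numbers — and one must verify that the genericity perturbations required for the B\'{e}zout count can be performed without altering the homotopy type. The Morse-theoretic and B\'{e}zout parts, by contrast, are essentially formal once this setup is in place.
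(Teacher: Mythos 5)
This theorem is quoted from Milnor and the paper gives no proof of its own (it only records that Milnor used Alexander duality and Morse theory), so the comparison is against Milnor's original argument. Your sketch is essentially that argument: the sum-of-squares reduction, the tube $M=\{g\le\varepsilon\}$, Morse theory with a generic linear functional giving the factor $\tfrac12$, and the B\'ezout count $2d(2d-1)^{n-1}$ are all correct and in the right order; working directly on the solid region $M$ is just the Lefschetz-dual formulation of Milnor's Alexander-duality argument on the complement, so I would not call it a different route. One caveat on the step you yourself flag as the main obstacle: the non-compactness really cannot be waved away, since $\{g\le\varepsilon\}$ need not retract onto $V$ for any $\varepsilon>0$ without first truncating --- e.g.\ for $g=x^2+(xy-1)^2$ in $\RR^2$ one has $V=\emptyset$ while $\{g\le\varepsilon\}\ne\emptyset$ for every $\varepsilon>0$ because of points escaping to infinity. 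Of your two proposed fixes, the ball truncation (with the conic structure at infinity and a corner-smoothing or careful boundary Morse theory) is the one that works and is what Milnor does; the alternative of adding $\eta\|x\|^{2N}$ is only admissible if $2N\le 2d$, since otherwise the perturbed polynomial has degree exceeding $2d$ and the B\'ezout bound, hence the final constant $d(2d-1)^{n-1}$, is lost. With that step done via truncation, the proposal is a faithful and correct reconstruction of the cited proof.
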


\section{General cases}

In this section, we show that the number of product vectors with \eqref{eq:star} is uniformly bounded for all subspaces $D$ and $E$ in $\CC^m \ts \CC^n$. We begin with the results in \cite{kiem2011existence} and modify them to fit our situation.

\begin{theo}[{\cite[Theorem 3]{kiem2011existence}}]\label{kiem}
Let $D$ and $E$ be subspaces of $\CC^m \otimes \CC^n$ with $\dim D^{\perp}=k, \dim E^{\perp}=l$.
\begin{enumerate}
\item  If $k+l <m+n-2$, there are infinitely many product vectors $\xy$ with \eqref{eq:star}.
\item If $k+l=m+n-2$ and $\sum_{r+s=m-1}(-1)^r{k \choose r}{l \choose s} \neq 0$, there exists a nonzero product vector $\xy$ with \eqref{eq:star}. 
\item If $k+l > m+n-2$, then there are no nonzero product vectors $\xy$ with \eqref{eq:star} for generic $D$ and $E$.
\end{enumerate}
\end{theo}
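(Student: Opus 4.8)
The plan is to exhibit the set of product vectors satisfying \eqref{eq:star} as the zero set of a \emph{smooth} section of a complex vector bundle on $X:=\PP^{m-1}\times\PP^{n-1}$, and to read all three statements off its top Chern class together with a dimension count. Fix bases $\phi_1,\dots,\phi_k$ of $D^{\perp}$ and $\psi_1,\dots,\psi_l$ of $E^{\perp}$, so that a product vector $\xy$ satisfies \eqref{eq:star} precisely when $\langle\phi_i\,|\,x,y\rangle=0$ for all $i$ and $\langle\psi_j\,|\,\bar x,y\rangle=0$ for all $j$. Each $\langle\phi_i\,|\,x,y\rangle$ is the evaluation of a section of $\cO(1,1):=\pr_1^*\cO_{\PP^{m-1}}(1)\ts\pr_2^*\cO_{\PP^{n-1}}(1)$; each $\langle\psi_j\,|\,\bar x,y\rangle$ is conjugate-linear in $x$ and linear in $y$, hence the evaluation of a section of a smooth complex line bundle $\cF$ of bidegree $(-1,1)$, the anti-holomorphic dependence on $x$ replacing $\cO_{\PP^{m-1}}(1)$ by its conjugate, whose first Chern class is $-h_1$ (conjugation negates the first Chern class of a line bundle; $h_1,h_2$ denote the hyperplane classes of the two factors). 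Thus the solution set is the zero set of a section $\sigma=\sigma_{D,E}$ of
$$\cE \;:=\; \cO(1,1)^{\oplus k}\,\oplus\,\cF^{\oplus l},$$
a rank $k+l$ complex bundle over the $(m+n-2)$-dimensional complex manifold $X$, with total Chern class $(1+h_1+h_2)^k(1-h_1+h_2)^l$; when $k+l=m+n-2$ this gives
$$\big\langle c_{k+l}(\cE),[X]\big\rangle \;=\; \big\langle (h_1+h_2)^k(-h_1+h_2)^l,[X]\big\rangle \;=\; (-1)^{m-1}\!\!\sum_{r+s=m-1}\!(-1)^r\binom{k}{r}\binom{l}{s}.$$
The decisive structural point is that, although $\sigma$ is not holomorphic, $\cE$ is a bona fide smooth complex vector bundle whose isomorphism type is independent of $(D,E)$, so topological arguments apply uniformly.

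For part (2), assume $k+l=m+n-2$, so $\rank_{\CC}\cE=\dim_{\CC}X$. If $\sigma_{D,E}$ were nowhere zero it would span a trivial complex line subbundle, splitting $\cE\cong\underline{\CC}\oplus\cE'$ with $\rank_{\CC}\cE'=k+l-1$, whence $c_{k+l}(\cE)=c_{k+l}(\cE')=0$ for rank reasons — contradicting the hypothesis that $\sum_{r+s=m-1}(-1)^r\binom{k}{r}\binom{l}{s}\neq 0$. Hence $\sigma_{D,E}$ has a zero, i.e.\ a nonzero product vector with \eqref{eq:star}, and this holds for \emph{every} $(D,E)$ of the prescribed dimensions, not just generic ones.

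For part (3), assume $k+l>m+n-2$ and form the incidence set $\cI=\{([x],[y],D,E):\xy\in D,\ \bxy\in E\}$ inside $X\times\Gr(mn-k,mn)\times\Gr(mn-l,mn)$: it is closed, hence compact, and the projection to $X$ exhibits it as a fibre bundle whose fibre over $([x],[y])$ is the product of the sub-Grassmannians of subspaces through $\CC\xy$ and through $\CC\bxy$, so $\dim_{\RR}\cI=\dim_{\RR}\!\big(\Gr(mn-k,mn)\times\Gr(mn-l,mn)\big)+2\big((m+n-2)-(k+l)\big)$, strictly less than $\dim_{\RR}\!\big(\Gr(mn-k,mn)\times\Gr(mn-l,mn)\big)$. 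Thus the image of $\cI$ under the other projection is a closed set of measure zero, and over its open dense complement there is no nonzero product vector with \eqref{eq:star}. For part (1), assume $k+l<m+n-2$. The solution set is nonempty for every $(D,E)$: choose $D'\subseteq D$ and $E'\subseteq E$ with $\dim(D')^{\perp}=k'\geq k$, $\dim(E')^{\perp}=l'\geq l$, $k'+l'=m+n-2$ and $\sum_{r+s=m-1}(-1)^r\binom{k'}{r}\binom{l'}{s}\neq0$ — such a pair exists because the alternating sum for $(k',l')$ is, up to sign, the coefficient of $t^{m-1}$ in $(1-t)^{k'}(1+t)^{l'}$, and as $(k',l')$ ranges over the admissible values these polynomials span $(1-t)^k(1+t)^l\cdot\CC[t]_{\leq m+n-2-k-l}$, a space on which $p\mapsto[t^{m-1}]p$ does not vanish identically (vanishing there would force that many consecutive coefficients of $(1-t)^k(1+t)^l$ to be zero) — and then part (2) applied to $(D',E')$ produces a product vector in $D'\subseteq D$ with partial conjugate in $E'\subseteq E$. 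Finally, since $\sigma_{D,E}$ is real-analytic and $\cE$ has real rank $2(k+l)<\dim_{\RR}X$, every point of the solution set lies on a component of dimension at least $2\big((m+n-2)-(k+l)\big)>0$, so once nonempty the solution set is infinite.

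The part I expect to be the main obstacle is the very first step: making precise, and Chern-class-computing, the claim that the non-holomorphic constraint $\langle\psi_j\,|\,\bar x,y\rangle=0$ cuts out the zero set of a section of a genuine smooth line bundle (the conjugate of $\cO_{\PP^{m-1}}(1)$ in the first factor), so that the topology of $X$ still governs the count despite the entanglement of $x$ with $\bar x$ — this is exactly the difficulty flagged in the introduction. The one remaining technicality is the combinatorial input to (1): showing $(k,l)$ can always be padded to $(k',l')$ with $k'+l'=m+n-2$ keeping the alternating sum nonzero, equivalently that $(1-t)^{k'}(1+t)^{l'}$ cannot have $m+n-1-k-l$ consecutive vanishing coefficients at the relevant degrees; I would settle this by a Newton-inequality/real-rootedness argument, or simply borrow Theorem 1 and Lemma 2 of \cite{kiem2011existence}, from which parts (1) and (3) also follow directly.
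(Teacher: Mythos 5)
Your bundle-theoretic setup and parts (2)--(3) are sound and are essentially the argument this theorem rests on (the paper itself does not reprove it, but only sketches the ideas and defers to \cite{kiem2011existence}): the constraints coming from $E$ are sections of the conjugate line bundle in the first factor, the relevant obstruction is the Euler class $\langle (h_1+h_2)^k(h_2-h_1)^l,[X]\rangle$, matching the paper's class $(-\alpha+\beta)^k(\alpha+\beta)^l$, and your part (3) is the same incidence-set dimension count the paper indicates. The genuine gap is the final step of your part (1). The principle you invoke --- that a real-analytic section of a bundle of real rank $2(k+l)<\dim_{\RR}X$ has zero locus of local dimension at least $\dim_{\RR}X-2(k+l)$ at every zero --- is false; that codimension bound is a feature of \emph{holomorphic} sections, and your section is precisely not holomorphic, as you yourself emphasize at the outset. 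For instance $(z,w)\mapsto z\bar z+w\bar w$ is a real-analytic section of the trivial complex line bundle on $\CC^2$ (real rank $2$, base of real dimension $4$) whose zero set is a single point; sums of Hermitian squares can cut the dimension far below the expected one, and nothing in your setup excludes this behaviour. So ``nonempty $\Rightarrow$ infinite'' is unproven as written.

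The fix stays entirely inside your framework: suppose the solution set for $(D,E)$ were finite. Choose the extra $k'-k$ and $l'-l$ linear functionals cutting out $D'\subseteq D$ and $E'\subseteq E$ generically, so that none of the finitely many product vectors $\xy$ solving the original system satisfies them (passing through a given nonzero vector is a proper condition on a hyperplane). Since $D'\subseteq D$ and $E'\subseteq E$, every solution for $(D',E')$ is a solution for $(D,E)$, hence there are none at all --- contradicting your part (2) once $(k',l')$ is chosen with nonvanishing alternating sum. That choice is exactly your combinatorial claim; note that Newton's inequalities alone do not rule out two consecutive vanishing coefficients of $(1-t)^{k}(1+t)^{l}$ (they degenerate to $0\ge 0$ there), whereas the Rolle-type fact that the interior roots of successive derivatives of $(1-t)^k(1+t)^l$ stay simple does, or one can simply quote Lemma 2 of \cite{kiem2011existence} as you propose --- which is all the paper itself does for this statement.
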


These modified results may be clear for people who are familiar with intersection theory. For the reader's convenience, we briefly comment the proof of ideas.

(1) follows from the fact that $(-\alpha+\beta)^k(\alpha+\beta)^l$ is always a positive dimensional class in the cohomology ring $H\sta(\PP^{m-1} \times \PP^{n-1}) \cong \ZZ[\alpha, \beta]/\langle \alpha^m, \beta^n \rangle$ according to the proof of Theorem 1 and Lemma 2 in \cite{kiem2011existence}. (3) follows from the fact that for a generic $D$, the set of $E$ for which there exists a nonzero product vector $\xy$ with \eqref{eq:star} is a proper subset in $\Gr(mn-l,mn)$ of real dimension $\dim_{\RR} \Gr(mn-l,mn) - 2(k+l-m-n+2)$.\\

A disadvantage of homological method to count intersection numbers is that homological calculations may not lead to an actual intersection number because of the orientation issue.

The first main result of this paper tells us that the number of product vectors is uniformly bounded. 

\begin{theo}\label{mxn}
The number of product vectors $\xy$ with \eqref{eq:star} is uniformly bounded depending only on $m$ and $n$ for all subspaces $D$ and $E$ in $\CC^m \ts \CC^n$ whenever it is finite.
\end{theo}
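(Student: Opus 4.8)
The plan is to exhibit the product vectors satisfying \eqref{eq:star} as the real points of a zero-dimensional real algebraic set whose ambient dimension and defining degrees depend only on $m$ and $n$, and then to invoke Milnor's bound, Theorem \ref{beti_upper}. Fix bases $\ket{d_1},\dots,\ket{d_k}$ of $D^{\perp}$ and $\ket{e_1},\dots,\ket{e_l}$ of $E^{\perp}$. Under the Segre correspondence a product vector $\xy$ is a point $([x],[y])\in\PP^{m-1}\times\PP^{n-1}$, and it satisfies \eqref{eq:star} exactly when
\[
\langle d_i | x,y\rangle = 0 \quad (1\le i\le k), \qquad \langle e_j | \bar x, y\rangle = 0 \quad (1\le j\le l).
\]
Here $\langle d_i | x,y\rangle = \sum_{p,q}\overline{(d_i)_{pq}}\,x_p y_q$ is bilinear of bidegree $(1,1)$ in $(x,y)$, whereas $\langle e_j | \bar x, y\rangle = \sum_{p,q}\overline{(e_j)_{pq}}\,\overline{x_p}\,y_q$ is only \emph{real}-bilinear, being conjugate-linear in $x$. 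This failure of holomorphicity is exactly why complex intersection theory cannot produce an honest count here, and it forces the passage to a real model.

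First I would cover $\PP^{m-1}\times\PP^{n-1}$ by its $mn$ standard affine charts $U_\mu\times V_\nu$. On $U_\mu\times V_\nu$ every point has a unique representative with $x_\mu=1$ and $y_\nu=1$, so the coordinates reduce to the $2(m+n-2)$ real variables given by the real and imaginary parts of $x_p$ $(p\ne\mu)$ and of $y_q$ $(q\ne\nu)$. Substituting $x_\mu=y_\nu=1$ into the $k+l$ equations above and separating real and imaginary parts turns them into $2(k+l)$ real polynomial equations of degree at most $2$ in these variables, since each monomial $x_py_q$ or $\overline{x_p}\,y_q$ becomes constant, linear, or quadratic once the normalization is imposed. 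Hence on each chart the set of product vectors with \eqref{eq:star} is a real algebraic subset of $\RR^{2(m+n-2)}$ defined by polynomials of degree $\le 2$.

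Now suppose the number of product vectors satisfying \eqref{eq:star} is finite. Then in each chart this set is a finite set of points, so its zeroth Betti number is its cardinality and all higher Betti numbers vanish; by Theorem \ref{beti_upper}, applied with ambient dimension $2(m+n-2)$ and degree bound $d=2$, its cardinality is at most $2\cdot 3^{\,2(m+n-2)-1}$. Since every product vector with \eqref{eq:star} lies in at least one of the $mn$ charts, the total number is at most $mn\cdot 2\cdot 3^{\,2(m+n-2)-1}$, a quantity depending only on $m$ and $n$.

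The step I expect to require the most care is the passage from the projective picture to a genuinely zero-dimensional affine real variety. The naive normalization $\|x\|=\|y\|=1$ leaves a residual $U(1)\times U(1)$ gauge symmetry under which both families of equations are invariant, so each product vector would sweep out a two-torus in the sphere product and Milnor's bound would be useless for counting; the affine charts remove this torus cleanly at the cost only of the factor $mn$. A chart-free alternative is to embed $\PP^{m-1}\times\PP^{n-1}$ into the product of the spaces of Hermitian matrices via $[x]\mapsto\ket{x}\bra{x}/\langle x|x\rangle$, under which (using $\overline{\ket{x}\bra{x}}=\ket{\bar x}\bra{\bar x}$) the conditions $\langle d_i | x,y\rangle=0$ and $\langle e_j | \bar x, y\rangle=0$ become degree-$2$ equations in the matrix entries; Milnor's bound then gives the same type of conclusion with a larger, but still $(m,n)$-dependent, exponent.
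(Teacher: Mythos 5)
Your argument is correct, but it takes a genuinely different route from the paper's. The paper first eliminates $y$: it rewrites \eqref{eq:star} as the linear system \eqref{eq:system_general} in $y$ whose coefficients are linear in $x$ or $\bar x$, imposes that all $n\times n$ minors of the coefficient matrix vanish, dehomogenizes $x$, and applies Theorem \ref{beti_upper} to that system of degree-$n$ real polynomials in $2m-2$ real variables; each surviving $x$ then contributes at most one $y\in\PP^{n-1}$ (finiteness is used a second time here, to rule out rank $\leq n-2$ in \eqref{eq:system_general}), and a preliminary reduction to $k+l=m+n-2$ is made. You instead keep $x$ and $y$ together, dehomogenize both on the $mn$ affine charts of $\PP^{m-1}\times\PP^{n-1}$, and observe that the defining conditions become real polynomial equations of degree at most $2$ in $2(m+n-2)$ real variables, so Milnor's bound applies directly with $d=2$. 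Your version buys simplicity and robustness: there is no case analysis on $k+l$ versus $m+n-2$, no rank discussion for the linear system in $y$, and no need to argue that finiteness of the set of product vectors forces finiteness of the set of common roots of the minors --- your chart sets inject into the set of product vectors by construction, so finiteness is immediate and $b_0$ equals the cardinality. What the paper's elimination buys is a much smaller constant for fixed $m$ (for $m=2$ it gives $n(2n-1)^2=O(n^3)$, whereas your bound $mn\cdot 2\cdot 3^{2(m+n-2)-1}$ grows exponentially in $n$); since Theorem \ref{mxn} only asserts the existence of some bound depending on $m$ and $n$, either constant suffices. Your closing observation about the residual $U(1)\times U(1)$ freedom of the unit-norm normalization is apt --- it is exactly why one must pass to affine charts (or to rank-one projections) rather than to unit spheres before invoking Milnor.
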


\begin{proof}
Let $\dim D^{\perp}=k$ and $\dim E^{\perp}=l$. If $k+l < m+n-2$, there are always infinitely many product vectors $\xy$ with \eqref{eq:star} by Theorem \ref{kiem}, (1). If $k+l > m+n-2$, then the number is obviously smaller than or equal to the case we choose a subspace of codimension $m+n-l-2$ containing $D$ instead of $D$. So, it is enough to consider the case $k+l=m+n-2$.

Let $D=\{z_{ij} \;|\; \sum_{i,j}A_{ij}^{(s)}z_{ij}=0 \textrm{ for } 1 \leq s \leq k, A_{ij}^{(s)} \in \CC \}$ and $E=\{z_{ij} \;|\; \sum_{i,j}B_{ij}^{(t)}z_{ij}=0 \textrm{ for } 1 \leq t \leq l, B_{ij}^{(t)} \in \CC \}$, where $(z_{ij})$ is the coordinate of the set $M_{m \times n}(\CC) \cong \CC^m \otimes \CC^n$ of all $m \times n$ complex matrices. Then $\xy$ is contained in $D$ if and only if $\sum_{i,j}A_{ij}^{(s)}x_iy_j=0$ for every $1 \leq s \leq k$, where $x=(x_i) \in \CC^m$ and $y=(y_j) \in \CC^n$. In the same way, $\bxy$ is contained in $E$ if and only if $\sum_{i,j}B_{ij}^{(t)}\bar{x}_iy_j=0$ for every $1 \leq t \leq l$. All these equations are listed as follows:

\begin{equation}\label{eq:system_general}
\begin{split}
y_1L_1^{(1)} + y_2L_2^{(1)}+ \cdots & + y_nL_n^{(1)} =0 \\
y_1L_1^{(2)} + y_2L_2^{(2)}+ \cdots & + y_nL_n^{(2)} =0 \\
& \vdots \qquad \qquad \qquad \\
y_1L_1^{(m+n-2)} + y_2L_2^{(m+n-2)}+ & \cdots + y_nL_n^{(m+n-2)} =0
\end{split}
\end{equation}

where, $ L_j^{(q)}(x)= \begin{cases} \sum_{i=1}^m A_{ij}^{(q)}x_i & \text{ if } 1 \leq q \leq k \\ \sum_{i=1}^m B_{ij}^{(q-k)}\bar{x}_i & \text{ if } k+1 \leq q \leq m+n-2 \end{cases} $\\

If there is a nonzero product vector, then the $(m+n-2) \times n$ matrix $\left( L_j^{(i)} \right)$ should have rank less than $n$. Hence, all the $n \times n$ minors of $\left( L_j^{(i)} \right)$ must be zero. If the number of common roots of all the minors is finite, we may take a hyperplane in $\CC^m$ not passing through any common roots. Hence, we may assume one of the variables $\seq{x}{m}$ to be $1$ after using an appropriate linear transformation by homogenity of minors. Note that each minor may be thought of as two real homogeneous polynomials of degree at most $n$. Then we have to solve the system of real equations of degree at most $n$ with $2m-2$ variables. By Theorem \ref{beti_upper}, the number of common roots is at most $n(2n-1)^{2m-2}$ when it is finite. To each common root of the minors, the system of linear equations \eqref{eq:system_general} in $y_i$ has at most $1$ solution in $\PP^{n-1}$. Therefore, the number of product vectors is at most $n(2n-1)^{2m-2}$, which is the uniform upper bound that we wanted.
\end{proof}

It is very hard to find the sharp upper bound for the number of product vectors $\xy$ with \eqref{eq:star} in general. However, if $(k,l) = (0, m+n-2)$ or $(m+n-2,0)$, then we can give a sharp upper bound by intersection theory in algebraic geometry. The standard reference is \cite{fulton_intersection}. The following result is mentioned in \cite{walgate08generic} without proof.

\begin{theo}[{\cite[Corollary 3.9]{walgate08generic}}]\label{segre}
Let $D$ be a subspace $\CC^m \ts \CC^n$ with $\dim D^{\perp} = m+n-2$. Then the number of product vectors in $D$ is at most $m+n-2 \choose m-1$ whenever it is finite. Moreover, the number of different product vectors in $D$ is exactly $m+n-2 \choose m-1$ for generic $D$.
\end{theo}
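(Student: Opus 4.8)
The plan is to translate the statement into classical intersection theory on the Segre variety. Product vectors in $\CC^m \ts \CC^n$, regarded up to a nonzero scalar, are precisely the points of the Segre variety
$$\Sigma := \sigma(\PP^{m-1}\times\PP^{n-1}) \;\emb\; \PP(\CC^m \ts \CC^n) \cong \PP^{mn-1},$$
and a product vector $\xy$ lies in $D$ exactly when the corresponding point of $\Sigma$ lies on the linear subspace $L := \PP(D) \subset \PP^{mn-1}$. Since $\dim D^{\perp} = m+n-2$ we have $\dim L = mn-1-(m+n-2)$, so $L$ has codimension $m+n-2$ in $\PP^{mn-1}$, which is exactly $\dim \Sigma = (m-1)+(n-1)$. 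Hence counting product vectors in $D$ amounts to counting the points of the linear section $\Sigma \cap L$, which has the expected dimension $0$ precisely when it is finite; moreover, since $\sigma$ is injective, distinct points of $\Sigma$ correspond to distinct product vectors up to scalar.

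Next I would compute $\deg \Sigma$. Writing $H\sta(\PP^{m-1}\times\PP^{n-1};\ZZ) \cong \ZZ[\alpha,\beta]/\langle\alpha^m,\beta^n\rangle$ with $\alpha,\beta$ the hyperplane classes of the two factors, the Segre map satisfies $\sigma\sta\cO(1) = \cO(1,1)$, so the hyperplane class of $\Sigma$ pulls back to $\alpha+\beta$. Therefore
$$\deg \Sigma = \int_{\PP^{m-1}\times\PP^{n-1}} (\alpha+\beta)^{m+n-2} = \sum_{r+s=m+n-2}\binom{m+n-2}{r}\int_{\PP^{m-1}\times\PP^{n-1}}\alpha^r\beta^s = \binom{m+n-2}{m-1},$$
since the integral picks out only the term $\alpha^{m-1}\beta^{n-1}$.

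For the generic statement I would invoke Bertini's theorem: for $L$ in a Zariski-open dense subset of the relevant Grassmannian $\Gr(mn-m-n+2,\,mn)$, the section $\Sigma\cap L$ is transverse and consists of $\deg \Sigma$ reduced points, hence of exactly $\binom{m+n-2}{m-1}$ distinct product vectors; this is precisely the notion of ``generic'' used in this paper. For the upper bound when $\Sigma \cap L$ is merely finite (with $D$ arbitrary), I would use that the intersection is then proper --- every component has the expected dimension $0$ --- so the refined B\'ezout theorem (\cite{fulton_intersection}) gives $\sum_{p\in\Sigma\cap L} i(p;\Sigma\cdot L) = \deg\Sigma\cdot\deg L = \deg\Sigma$; since each local intersection multiplicity is at least $1$, the number of points of $\Sigma\cap L$ is at most $\deg\Sigma = \binom{m+n-2}{m-1}$.

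The dimension bookkeeping and the degree computation are routine. The point that requires a little care is the ``whenever it is finite'' clause: there $D$ need not be generic, so $\Sigma\cap L$ may be non-reduced or in special position, and one genuinely needs the refined B\'ezout theorem (or an equivalent length-versus-degree bound for zero-dimensional linear sections) rather than a naive transversality argument. Apart from that, everything follows from standard intersection theory.
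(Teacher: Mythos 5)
Your proposal follows the same route as the paper: identify product vectors in $D$ with points of the linear section $\Sigma \cap \PP(D)$ of the Segre variety and compute $\deg\Sigma = \binom{m+n-2}{m-1}$ as the coefficient of $\alpha^{m-1}\beta^{n-1}$ in $(\alpha+\beta)^{m+n-2}$. The paper's proof stops there, so your explicit invocation of the refined B\'ezout theorem for the ``whenever finite'' upper bound and of Bertini for the generic count supplies exactly the details the paper leaves implicit; both additions are correct.
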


\begin{proof}
Let $\sigma : \PP^{m-1} \times \PP^{n-1} \lra \PP^{mn-1}$ be the Segre embedding, i.e., $(x, y)$ maps to $x \ts y$. Since $D$ is the intersection of $m+n-2$ hyperplanes, the number we want to know is nothing but the degree of the Segre variety $\image(\sigma)$. Note that the degree is the coefficient of $\alpha^{m-1} \beta^{n-1}$ of $(\alpha +\beta)^{m+n-2}$ in the integral cohomology ring $H\sta(\PP^{m-1} \times \PP^{n-1}) \cong \ZZ[\alpha, \beta]/\langle \alpha^m, \beta^n \rangle$, so it is $m+n-2 \choose m-1$.
\end{proof}

\section{Qubit-Qunit cases}

In this section, we give an upper bound expected to be sharp for generic subspaces $D$ and $E$ in $\CC^2 \ts \CC^n$ using other techniques. We start with a lemma.

\begin{lemm}\label{k2l2}
Let $P$ and $Q$ be polynomials in $\CC[x,y]$ of bidegree at most $(k,l)$ and $(l,k)$ respectively. If the resultant $Res_y(P,Q)$ is not identically zero, then the number of common roots of $P$ and $Q$ is finite and less than or equal to $k^2+l^2$.
\end{lemm}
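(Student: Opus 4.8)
The plan is to bound the number of common zeros of $P$ and $Q$ by using the resultant to project the solutions to the $x$-line and controlling the fiber over each value of $x$. Since $Res_y(P,Q)$ is not identically zero, Theorem \ref{resultant} guarantees that $P$ and $Q$ share no common factor in $\CC[x,y]$, so their common zero locus is finite; it remains to get the quantitative bound $k^2+l^2$. First I would regard $P$ and $Q$ as polynomials in $y$ over the field $\CC(x)$. Writing $P = \sum_{i} a_i(x) y^i$ with $\deg_y P \le l$ and $\deg_x a_i \le k$, and similarly $Q = \sum_j b_j(x) y^j$ with $\deg_y Q \le k$ and $\deg_x b_j \le l$, the resultant $Res_y(P,Q)$ is the determinant of the Sylvester matrix, which has $k$ rows built from the $a_i$ (each entry of $x$-degree $\le k$) and $l$ rows built from the $b_j$ (each entry of $x$-degree $\le l$). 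Hence $\deg_x Res_y(P,Q) \le k\cdot k + l\cdot l = k^2+l^2$.

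Next I would argue that the common roots of $P$ and $Q$ project, via $(x,y)\mapsto x$, into the zero set of this single-variable polynomial $Res_y(P,Q)$, which has at most $k^2+l^2$ roots. The subtle point is that one wants \emph{at most one} common root $(x_0,y_0)$ of $P,Q$ above each such $x_0$, so that the count of pairs is bounded by the count of admissible $x_0$. The plan is to handle this by a generic change of coordinates: after a linear substitution $y \mapsto y + cx$ (or an affine change mixing the variables), which preserves the bidegree bounds up to the relevant total degrees, one may assume that no two distinct common zeros of $P$ and $Q$ share the same $x$-coordinate; this is possible precisely because there are only finitely many common zeros, so only finitely many "bad" directions $c$ need to be avoided. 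With this normalization in place, the fiber over each root of $Res_y(P,Q)$ contains at most one common zero, giving the desired inequality.

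I would then assemble the pieces: finiteness from the no-common-factor statement, the degree bound $\deg_x Res_y(P,Q)\le k^2+l^2$ from the block structure of the Sylvester matrix, and the generic-coordinates reduction ensuring injectivity of the projection on the common-zero set. The main obstacle I anticipate is the second part — making the reduction to "one common root per fiber" clean while simultaneously respecting the \emph{asymmetric} bidegree hypotheses $(k,l)$ for $P$ and $(l,k)$ for $Q$, since a naive substitution $y\mapsto y+cx$ may spoil these bidegrees. One safe route is to avoid touching the bidegrees at all: instead of changing coordinates, note that for all but finitely many $x_0 \in \CC$ the leading $y$-coefficient of $P$ (or of $Q$) is nonzero, and carry out a separate, crude bound for the finitely many exceptional vertical lines, combining it with the resultant count on the generic lines. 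Either way, the heart of the argument is the clean degree computation for $Res_y(P,Q)$, and the rest is bookkeeping to convert that into a bound on the number of points.
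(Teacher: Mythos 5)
Your proposal follows essentially the same route as the paper: finiteness from the no-common-factor criterion, the bound $\deg_x Res_y(P,Q)\le k\cdot k+l\cdot l$ from the block structure of the Sylvester matrix, and a reduction ensuring at most one common zero above each root of the resultant. Your worry that a substitution $y\mapsto y+cx$ would spoil the asymmetric bidegree hypotheses is a fair one that the paper's phrase ``after using an appropriate linear transformation'' also glosses over, and your suggested workaround (or, alternatively, the standard fact that a fiber with several common zeros forces a multiple root of the resultant) closes that small gap.
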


\begin{proof}
Since the resultant $Res_y(P,Q)$ is not identically zero, $P$ and $Q$ have no nontrivial common factor by Theorem \ref{resultant}. Over $\CC$, this is equivalent to saying that the number of common roots in $\CC^2$ is finite. We may assume the $y$-coordinates of the common roots of $P$ and $Q$ are different from one another after using an appropriate linear transformation. Note that all the common roots $(\alpha, \beta)$ of $P$ and $Q$ satisfy the equation $Res_y(P,Q)(\alpha)=0$ by Theorem \ref{resultant}. So, the number of common roots of $P$ and $Q$ is less than or equal to the number of roots of $Res_y(P,Q)=0$.

We claim that the degree of $Res_y(P,Q)$ is less than or equal to $k^2+l^2$. Note that all the coefficients of $P$ (resp. $Q$) are of degree at most $k$ (resp. $l$) and each term of the determinant of the Sylvester matrix is exactly the product of $k$ coefficients of $P$ and $l$ coefficients of $Q$. Thus, the degree of $Res_y(P,Q)$ is at most $k^2+l^2$.
\end{proof}

The following is the second main result of this paper. We assume that $k,l$ and $n$ are nonnegative integers with $k+l=n$, $n \geq 2$. 

\begin{theo}\label{2xn}
Let $D$ and $E$ be subspaces of $\CC^2 \ts \CC^n$ with $\dim D^{\perp}=k, \dim E^{\perp}=l$. Then there exists a real open subvariety $\cU$ of the product of complex Grassmannians $\Gr(2n-k,2n) \times \Gr(2n-l,2n)$ in which the number of product vectors $\xy$ with \eqref{eq:star} is less than or equal to $k^2 + l^2$ for all $D$ and $E$. 
\end{theo}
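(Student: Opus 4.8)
The plan is to eliminate the $\CC^n$-variable from the system \eqref{eq:system_general}, thereby reducing the count to the number of common zeros in $\CC^2$ of a polynomial of bidegree at most $(k,l)$ and one of bidegree at most $(l,k)$, and then to apply Lemma \ref{k2l2}.

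Since $m=2$, in the chart $x_1=1$ the system \eqref{eq:system_general} becomes $M(x,\bar x)\,y=0$, where $M(x,\bar x)=\left(L_j^{(i)}\right)$ is the square $n\times n$ matrix ($k+l=n$) whose first $k$ rows have entries affine-linear in $x$ and whose last $l$ rows have entries affine-linear in $\bar x$. A nonzero $y$ with \eqref{eq:star} lies over $x$ precisely when $\det M(x,\bar x)=0$, and since each term of this determinant is a product of $k$ entries from the $x$-rows and $l$ entries from the $\bar x$-rows, $\det M(x,\bar x)=P(x,\bar x)$ for a polynomial $P\in\CC[x,w]$ of bidegree at most $(k,l)$. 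The key observation is that $\overline{P(x,\bar x)}=Q(x,\bar x)$, where $Q\in\CC[x,w]$ is obtained from $P$ by conjugating the coefficients and interchanging the two variables; hence $Q$ has bidegree at most $(l,k)$, and every product vector $\xy$ with \eqref{eq:star} and $x_1\neq0$ produces a common zero $(x,\bar x)\in\CC^2$ of $P$ and $Q$.

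I would then take $\cU$ to be the set of $(D,E)$ such that: (i) no product vector with \eqref{eq:star} has $x_1=0$; (ii) $Res_w(P,Q)$ is not identically zero; (iii) only finitely many product vectors satisfy \eqref{eq:star}. On $\cU$, Lemma \ref{k2l2} bounds the number of common zeros of $P$ and $Q$ by $k^2+l^2$, hence bounds by $k^2+l^2$ the number of values of $x$ coming from product vectors (these map injectively into the common zeros via $x\mapsto(x,\bar x)$); by (iii) together with $\det M(x,\bar x)=0$, the kernel of $M(x,\bar x)$ is one-dimensional for each such $x$, so there is a single product vector over it; and by (i) there are no others. This yields the bound. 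Moreover each of (i), (ii), (iii) defines the complement of a closed subset of $\Gr(2n-k,2n)\times\Gr(2n-l,2n)$: (i) and (iii) because the total space of product vectors with \eqref{eq:star} is a closed subvariety of $\Gr(2n-k,2n)\times\Gr(2n-l,2n)\times\PP^1\times\PP^{n-1}$ whose projection to the Grassmannian product is proper, hence closed, so the loci of $(D,E)$ admitting such a vector with $x_1=0$ (respectively, admitting infinitely many) are closed, the latter by upper semicontinuity of fiber dimension; and (ii) because the vanishing of $Res_w(P,Q)$ is a polynomial condition, in local coordinates and their conjugates, on bases of $D^\perp$ and $E^\perp$.

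The main obstacle is to show $\cU$ is nonempty, i.e.\ that each of these three closed loci is a proper subset; once that is done the theorem follows, because $\Gr(2n-k,2n)\times\Gr(2n-l,2n)$ is smooth and connected, hence irreducible, so a finite union of proper closed subsets is again proper. One explicit pair suffices: choose $D$ and $E$ so that $M(x,w)$ is block diagonal with companion blocks, giving $\det M(x,w)=\prod_{i=1}^{k}(x-\lambda_i)\cdot\prod_{j=1}^{l}(w-\mu_j)$, and take the $\lambda_i,\mu_j$ to be distinct real numbers with $\{\lambda_i\}\cap\{\mu_j\}=\emptyset$. Then $Q=\prod_i(w-\lambda_i)\cdot\prod_j(x-\mu_j)$ is coprime to $P$, so $Res_w(P,Q)\not\equiv0$ by Theorem \ref{resultant}; the zeros of $\det M(x,\bar x)=\prod_i(x-\lambda_i)\prod_j(\bar x-\mu_j)$ are exactly $x\in\{\lambda_i\}\cup\{\mu_j\}$, and at each one a companion block becomes singular of corank $1$ while the other stays invertible, so $M(x,\bar x)$ has one-dimensional kernel and the solution set is finite; and the linear system obtained by setting $x_1=0$ has coefficient matrix $I_n$, so forces $y=0$. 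It is worth emphasizing that (ii) and (iii) are genuinely independent: a factor such as $(x-a)(w-\bar a)$ can divide $\det M(x,w)$ yet contribute only the single real point $x=a$, so $Res_w(P,Q)$ can vanish identically even when the solution set is finite, and both conditions must be imposed. The remaining ingredients---the bidegree count and the invocation of Lemma \ref{k2l2}---are routine.
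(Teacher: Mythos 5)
Your core argument is the one the paper uses: pass to an affine chart, observe that the determinant of the $n\times n$ matrix $\left(L_j^{(i)}\right)$ is a polynomial $P$ of bidegree at most $(k,l)$ in $(x,\bar x)$ whose conjugate $Q$ (conjugate coefficients, swapped variables) has bidegree at most $(l,k)$, bound the relevant roots by $k^2+l^2$ via $Res_w(P,Q)\not\equiv 0$ and Lemma \ref{k2l2}, get uniqueness of $y$ over each root from the rank-$(n-1)$ condition, and exhibit nonemptiness on a diagonal example --- the paper's example $D=\bigcap_j\{z_{1j}+jz_{2j}=0\}$, $E=\bigcap_j\{z_{1j}-jz_{2j}=0\}$ is precisely your block-diagonal construction with $1\times1$ blocks, and your closing remark that a factor like $(x-a)(w-\bar a)$ kills the resultant without creating infinitely many solutions is correct and worth keeping. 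The divergence is only in how the generic set is packaged: the paper pushes everything through the algebraic map $\phi$ to $\PP^{(k+1)(l+1)-1}$ and pulls back a locus defined by polynomial conditions on the coefficients, whereas you impose your conditions (i)--(iii) directly on $(D,E)$; and it is in justifying that your $\cU$ is a \emph{real open subvariety} that there is a genuine gap.

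Condition (ii) is unproblematic: the simultaneous vanishing of all coefficients of $Res_w(P,Q)$ is a system of real polynomial equations in the coefficients of $P$, well defined on the Grassmannians because a change of basis of $D^\perp$, $E^\perp$ only rescales $P$; this is the paper's $\cW'$. Condition (i) is also fine, but not for the reason you give: a product vector with $x_1=0$ is $\ket{e_2,y}$ and its partial conjugate is again $\ket{e_2,y}$, so the bad locus is simply $\{(D,E): D\cap E\cap(\ket{e_2}\ts\CC^n)\neq 0\}$, a rank condition, hence a subvariety --- no properness argument is needed or sufficient. The real problem is (iii). Upper semicontinuity of fiber dimension is a fact about complex (Zariski-closed, proper) families; your incidence variety is only a \emph{real} algebraic set because of the conjugation in \eqref{eq:star}, and in the real category the locus of infinite (or positive-dimensional) fibers of a proper map need not be closed: for $\{(s,[x:y:z])\in\RR\times\RR\PP^2 : x^2+y^2=sz^2\}$ the fiber is infinite exactly for $s>0$. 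Moreover, even where properness does give a closed image, the image of a real variety is in general only a closed semialgebraic set, not a real subvariety, so its complement need not be the complement of a proper real subvariety (and in particular need not be dense) --- which is exactly what the theorem, with the paper's definition of real open subvariety, demands. The repair is the one the paper effects: given (i) and (ii), infinitely many product vectors can only arise from a root $x$ of $P(x,\bar x)$ at which $\rank\left(L_j^{(i)}\right)\le n-2$, so replace (iii) by the condition that at every common zero of $P(z,w)$ and $Q(z,w)$, viewed in independent variables, some $(n-1)\times(n-1)$ minor of the matrix is nonzero; the failure of this condition is obtained by eliminating the (compactified) variables $(z,w)$ and is therefore cut out by resultant-type polynomials in the coefficients, i.e.\ it is a genuine real subvariety. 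Your diagonal example satisfies this stronger condition as well (at any common zero of $P$ and $Q$ exactly one diagonal entry vanishes), so nonemptiness survives and the rest of your argument goes through unchanged.
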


\begin{proof}
In the same way as in the proof of Theorem \ref{mxn}, if there is a nonzero product vector, the determinant of the $n \times n$ matrix $\left( L_j^{(i)} \right)$ should be zero. In order to find roots of the equation $\det(L_j^{(i)})=0$, we may assume $x_2=1$ after a linear transformation. Note that the polynomial $P(x_1, \bar{x}_1)=\det(L_j^{(i)})$ in $x_1$ and $\bar{x}_1$ has bidegree at most $(k,l)$. Let $Q(x_1,\bar{x}_1)$ be the conjugate of the polynomial $P$, i.e., $Q(x_1,\bar{x}_1)=\overline{P(x_1, \bar{x}_1)}$. If we think of the polynomials $P$ and $Q$ as the polynomials in two independent variables $z$ and $w$, then the resultant $Res_w(P,Q)$ is a polynomial in $z$. Note that all the coefficients of $Res_w(P,Q)$ are polynomials of the coefficients of $P$ and their conjugates. We regard the projective space $\PP^{(k+1)(l+1)-1}$ as the set of all polynomials in two variables of bidegree at most $(k,l)$. If we think of $P$ as an element of the projective space $\PP^{(k+1)(l+1)-1}$, then the locus where not all the coefficients of $Res_w(P,Q)$ are zero is a real open subvariety $\cW\pri$ in the complex projective space $\PP^{(k+1)(l+1)-1}$. Since all the roots of $P(z, \bar{z})=0$ are also the common roots of $P(z,w)=Q(z,w)=0$, the number of roots of $P(x_1, \bar{x}_1)=0$ is less than or equal to $k^2+l^2$ whenever $P$ is in $\cW\pri$ by Lemma \ref{k2l2}. To each root of $P(x_1, \bar{x}_1)=0$, the system of linear equations \eqref{eq:system_general} should have rank exactly $n-1$, because otherwise there are infinitely many vectors. That \eqref{eq:system_general} has rank precisely $n-1$ is a Zariski open condition because it is equivalent to saying that at least one of the $(n-1) \times (n-1)$ minors is not zero. We let $\cW$ be the intersection of $\cW\pri$ and the locus where at least one of the $(n-1) \times (n-1)$ minors is not zero for every common root of $P$ and $Q$.

We have to show that there is a real open subvariety $\cU$ of the product of complex Grassmannians $\Gr(2n-k,2n) \times \Gr(2n-l,2n)$ such that $P$ and $Q$ have finitely many common roots for any $(D,E) \in \cU$.  We define a map 
$$\phi : \Gr(2n-k,2n) \times \Gr(2n-l,2n) \lra \PP^{(k+1)(l+1)-1}$$ 
as follows: Each pair $(D,E)$ is mapped to the polynomial $P$ defined above. To be precise, for each pair $(D,E) \in \Gr(2n-k,2n) \times \Gr(2n-l,2n)$, we can write $D=\bigcap_{i=1}^k V_i^{\perp}$ and $E=\bigcap_{j=1}^l W_j^{\perp}$ for some vectors $V_i, W_j$ in $\CC^2 \ts \CC^n$. If we consider $V_i$ and $W_j$ as $2 \times n$ matrices via $M_{2 \times n} \cong \CC^2 \ts \CC^n$, we denote by $V_i^k$ and $W_j^k$ the $k$-th row of $V_i$ and $W_j$ for $k=1,2$, respectively. If we consider the $2n \times n$ matrix
$$\left( \begin{array}{c}
\text{---------}\text{---------} \quad V_1^1 \quad \text{---------}\text{---------} \\ \text{---------}\text{---------} \quad V_1^2 \quad \text{---------}\text{---------} \\ \vdots \\ \text{---------}\text{---------} \quad V_k^1 \quad \text{---------}\text{---------} \\ \text{---------}\text{---------} \quad V_k^2 \quad \text{---------}\text{---------} \\
\text{---------}\text{---------} \quad W_1^1 \quad \text{---------}\text{---------} \\ \text{---------}\text{---------} \quad W_1^2 \quad \text{---------}\text{---------} \\ \vdots \\ \text{---------}\text{---------} \quad W_l^1 \quad \text{---------}\text{---------} \\
\text{---------}\text{---------} \quad W_l^2 \quad \text{---------}\text{---------}
\end{array} \right)_{,}$$
the coefficient of a monomial $z^{k_1}w^{l_1}$ of $P(z, w)$ is the sum of all $n \times n$ minors consisting of $k_1$ $V_i^1$'s and $l_1$ $W_j^1$'s and $n-k_1-l_1$ $V_i^2$ or $W_j^2$'s.

For another representations of $D = \bigcap_{i=1}^k (V_i\pri)^{\perp}$ and $E = \bigcap_{k=1}^l (W_j\pri)^{\perp}$, the matrix whose rows are $V_i$ (resp. $W_j$) and the matrix whose rows are $V_i\pri$ (resp. $W_j\pri$) differ by left multiplication of a matrix $M_V$ in $\GL(k)$ (resp. $M_W$ in $\GL(l)$) when we consider $V_i, V\pri_i, W_j, W\pri_j$ as vectors in $\CC^2 \ts \CC^n$, i.e.,
$$
\left( \begin{array}{c}
\text{--} V\pri_1 \text{--} \\ 
\vdots \\
\text{--} V\pri_k \text{--} \\
\end{array} \right) = M_V
\left( \begin{array}{c}
\text{--} V_1 \text{--} \\ 
\vdots \\
\text{--} V_k \text{--} \\
\end{array} \right)_{,} \quad
\left( \begin{array}{c}
\text{--} W\pri_1 \text{--} \\ 
\vdots \\
\text{--} V\pri_k \text{--} \\
\end{array} \right) = M_W
\left( \begin{array}{c}
\text{--} W_1 \text{--} \\ 
\vdots \\
\text{--} W_k \text{--} \\
\end{array} \right)_{.}
$$

Then the images of $\phi$ differ by $\det(M_V)\det(M_W)$, that is, it is unchanged as an element of the projective space. Hence, $\phi$ is well-defined and algebraic. 

Now, we claim that $\phi^{-1}(\cW)$ is nonempty : We take $D = \bigcap_{j=1}^k\{z_{1j}+ j z_{2j}=0\}$ and $E = \bigcap_{j=k+1}^n\{z_{1j}- j z_{2j}=0\}$, where $(z_{ij})$ is the coordinate of the set $M_{m \times n}(\CC) \cong \CC^m \otimes \CC^n$ of all $m \times n$ complex matrices. Then $P(z,w)=(z+1)(z+2)\cdots(z+k)(w-k-1)\cdots(w-n)$ and $Q(z,w)=(z-k-1)\cdots(z-n)(w+1)\cdots(w+k)$. By direct calculation, the resultant $Res_w(P,Q)$ of $P$ and $Q$ is $\prod_{j=1}^k \prod_{i=k+1}^n (i+j)(z-i)^k (z+j)^l$, which is not identically zero. For each root of $Res_w(P,Q)$, the matrix
$$\begin{pmatrix}
z+1 & & & & && \\ & z+2 & & & && \\ & & \ddots & & && \\ & & & z+k & && \\ &&&& \bar{z}-k-1 && \\ &&&&& \ddots & \\ &&&&&& \bar{z}-n
\end{pmatrix}
$$
has rank exactly $n-1$. Hence, there is only one solution for $y$ in $\PP^{n-1}$ to each $x$ with $P(x,\bar{x})=0$. Therefore, $P$ should be located in $\cW$. Since $\phi$ is algebraic, $\phi^{-1}(\cW)$ is a real open subvariety, which is $\cU$ that we wanted.
\end{proof}

A real open subvariety $U$ of $X$ is, by definition, a complement of a proper real subvariety, i.e, $U=X-Y$, where $Y$ is a proper real subvariety of a real variety $X$. Note that a real open subvariety $U$ is an open dense subset of $X$. Hence, we reformulate the theorem above briefly as follows.

\begin{rema}
The number of product vectors $\xy$ with \eqref{eq:star} is less than or equal to $k^2 + l^2$ for generic $D$ and $E$ with $\dim D^{\perp}=k, \dim E^{\perp}=l$.
\end{rema}

As we follow the proof of theorem above carefully, we can determine whether given a pair $(D,E)$ in the product of Grassmannians $\Gr(2n-k,2n) \times \Gr(2n-l,2n)$ lies in $\cU$ or not. We give an algorithm to determine it below.

\begin{algorithm}
\SetKwRepeat{Repeat}{repeat}{until}
\KwData{$(D,E) \in \Gr(2n-k,2n) \times \Gr(2n-l,2n)$}
\KwResult{whether $(D,E)$ lies in $\cU$ or not.}
Consider the system of equations like $\eqref{eq:system_general}$\;
Produce the polynomials $P(z,w)$ and $Q(z,w)$\;
Calculate their resultant $Res_w(P,Q)$\;
\eIf{$Res_w(P,Q)$ is identically zero}{
$(D,E) \notin \cU$}
{\ForEach{root $z$ of $Res_w(P,Q)$}
{\ForEach{$(n-1) \times (n-1)$ minor $M(x_1,\bar{x}_1)$ of the matrix $\left( L^{(i)}_j \right)$}{Calculate $M(z,\bar{z})$\;}\If{all of the $M(z,\bar{z})$ are zero}{$(D,E) \notin \cU$}
}
{$(D,E) \in \cU$}}
\caption{How to determine whether a pair $(D,E)$ lies in $\cU$ or not.}
\end{algorithm}

We apply the algorithm to an example, which appears in \cite{hakye13counterexam}.

\begin{exam}
Let $a$ and $b$ be real numbers with the relation $0 < b < 4a^3/27$. Let $D=\{(z_{ij}) \in \CC^2 \ts \CC^4 \; | \; z_{12}-z_{21}=0, z_{13}-z_{22}=0, z_{14}-z_{23}=0 \}$ and $E=\{(z_{ij}) \in \CC^2 \ts \CC^4 \; | \; b z_{11}+z_{14}-a z_{22}=0\}$. If there exists a nonzero product vector $\xy$ with \eqref{eq:star}, then the determinant of the matrix
$$L = \left(\begin{array}{cccc}
-x_2 & x_1 & 0 & 0 \\ 0 & -x_2 & x_1 & 0 \\ 0 & 0 & -x_2 & x_1 \\ b\bar{x}_1 & -a\bar{x}_2 & 0 & \bar{x}_1
\end{array}\right)$$
should be zero. The determinant of $L$ is $a x_1^2 x_2 \bar{x}_2 - b x_1^3 \bar{x}_1 - x_2^3 \bar{x}_1$. Since $(x_1, x_2)=(1,0)$ is not a root of the equation $\det L =0$, we may assume $x_2=1$. Then the determinant $\det L$ turns into $a x_1^2 -b x_1^3 \bar{x}_1 -\bar{x}_1$. We let $P(z,w)= - bz^3 w + a z^2 -w$ and $Q(z,w)= - bw^3 z + a w^2 -z$. Their resultant $Res_w(P,Q)= b^3 z^{10} + 3 b^2 z^7 + (3b - a^3) z^4 + z$. Note that the $3 \times 3$ minor of $L$ consisting of first 3 rows and first 3 columns is $-1$, that is, not zero. Hence, $L$ has rank 3 for every root of $Res_w(P,Q)$. It means $(D,E)$ lies in $\cU$. Therefore, the number of product vectors $\xy$ with \eqref{eq:star} is less than or equal to $10$.
\end{exam}

\begin{exam}
Let $D$ and $E$ be subspaces $\CC^2 \ts \CC^n$ with $\dim D^{\perp}=k, \dim E^{\perp}=l$.
\begin{enumerate}
\item $2 \otimes 3$ case : \\
The number of product vectors $\xy$ with \eqref{eq:star} is
$$\begin{cases}
\leq 3 \qquad & \textrm{whenever it is finite if $k=0$ or $l=0$} \\
\leq 5 \qquad & \textrm{for almost all $D, E$ if $(k,l)=(2,1)$ or $(1,2)$} \\
\end{cases}$$

\item $2 \otimes 4$ case : \\
The number of product vectors $\xy$ with \eqref{eq:star} is
$$\begin{cases}
\leq 4 \qquad & \textrm{whenever it is finite if $k=0$ or $l=0$} \\
\leq 10 \qquad & \textrm{for almost all $D, E$ if $(k,l)=(3,1)$ or $(1,3)$} \\
\leq 8 \qquad & \textrm{for almost all $D, E$ if $(k,l)=(2,2)$}
\end{cases}$$
\end{enumerate}
\end{exam}

Kye \cite{kye13faces} described the conditions for which the number of product vectors is $0,1,2$ and $\infty$ explicitly in $2 \ts 2$ case. Ha and Kye \cite{hakye12uniquedecomp} found out an example in $2 \ts 3$ case in which there are exactly $5$ product vectors for $(k,l)=(1,2) \textrm{ or } (2,1)$. Recently, they \cite{hakye13counterexam} also  found out examples in $2 \ts 4$ case in which there are exactly $10$ product vectors $\xy$ with \eqref{eq:star} for $(k,l)=(1,3) \textrm{ or } (3,1)$. These are strong evidences that the upper bound $k^2+l^2$ could be sharp.

By the range criterion, if the number of product vectors $\xy$ with \eqref{eq:star} is less than either the dimension of $D$ or that of $E$, then all the density operators $\rho$ satisfying $\cR(\rho) = D$ and $\cR(\rho^{\Gamma}) = E$ are not separable.

For instance, even though $k \neq 0$ and $l \neq 0$, if some monomials do not appear in the polynomial $P$ in the proof of Theorem \ref{2xn}, then the degree of resultant $Res_y(P,Q)$ could be smaller than the dimension of $D$.

\begin{exam}
Let \\
$D=\left\lbrace(z_{ij}) \in \CC^2 \ts \CC^4 \; \middle| \;
\begin{matrix}
z_{11}-z_{12}+3z_{13}-3z_{14}+2z_{21}+(1+i)z_{22}=0, \\ (-2+3i)z_{11}+3z_{14}+z_{21}+2z_{22}+(7-i)z_{23}-z_{24}=0
\end{matrix}
\right\rbrace$
\\ and \\
$E=\left\lbrace(z_{ij}) \in \CC^2 \ts \CC^4 \; \middle| \; 
\begin{matrix}
11z_{11}+3z_{12}+z_{13}-2z_{23}=0, \\
(13-39i)z_{21}-(33-9i)z_{24}=0
\end{matrix}
\right\rbrace$.

If there exists a nonzero product vector $\xy$ with \eqref{eq:star}, then the determinant of the matrix
$$L = \left(\begin{array}{cccc}
x_1+2x_2 & -x_1+(1+i)x_2 & 3x_1 & -3x_1 \\ (-2+3i)x_1+x_2 & 2x_2 & (7-i)x_2 & 3x_1-x_2 \\ 11\bar{x}_1 & 3\bar{x}_1 & \bar{x}_1-2\bar{x}_2 & 0 \\ (13-39i)\bar{x}_2 & 0 & 0 & (-33+9i)\bar{x}_2
\end{array}\right)$$
should be zero. Note that $(1,0)$ is a root of the equation $\det L=0$ and gets the rank of $L$ to be $3$. In order to find other roots of the equation $\det L=0$, we may assume $x_2=1$. Then the determinant of $L$ turns into 
$$(4630+120i)x_1^2\bar{x}_1^2 + (16+492i)x_1 - (2308+1876i)\bar{x}_1 + (284-172i).$$ 
Let
\begin{equation*}
\begin{split}
&P(z,w)=(4630+120i)z^2w^2 + (16+492i)z - (2308+1876i)w + (284-172i), \\ &Q(z,w)=(4630-120i)z^2w^2 - (2308-1876i)z + (16-492i)w + (284+172i).
\end{split}
\end{equation*}
Then the resultant $Res_w(P,Q)$ of $P$ and $Q$ is 
\begin{small}
\begin{equation*}
\begin{split}
&(84363903624000 - 135466487852800i) z^6- (20371955468800 + 36686447532800i) z^5 \\ &- 2758522374400 z^4 - (95024101577600 + 52766808889600i) z^3 \\ &+ (-2037183324800 + 10244842192000i) z^2.
\end{split}
\end{equation*}
\end{small}
Note that $x_1=0$ is not a root of the equation $\det L=0$. Hence,  it is enough that we only investigate the roots of
\begin{small}
\begin{equation*}
\begin{split}
&T(z) := \frac{Res_w(P,Q)}{z^2} = \\ &(84363903624000 - 135466487852800i) z^4 - (20371955468800 + 36686447532800i) z^3 \\ &- 2758522374400 z^2 - (95024101577600 + 52766808889600i) z \\ & + (-2037183324800 + 10244842192000i).
\end{split}
\end{equation*}
\end{small}
Consider the $3 \times 3$ minor of last $3$ rows and first $3$ columns of the matrix $L$. It is $M(x_1) := (13-39i)((-19+3i)\bar{x}_1-4)$. Since $- 4/(19+3i)$ is the root of $M$, but not a root of $T$, the rank of $L$ should be $3$ for all the roots of $T$. Hence, the number of product vectors $\xy$ with \eqref{eq:star} is not more than $1 + \deg T = 5$. Since the dimension of $D$ is $6$, there is no collection of product vectors $\xy$ with \eqref{eq:star} and spanning $D$. Therefore, by the range criterion, all the density operators $\rho$ satisfying $\cR(\rho) = D$ and $\cR(\rho^{\Gamma}) = E$ are not separable. 
\end{exam}

\section{Conclusions and Discussions}

In this paper, we show that the number of product vectors $\xy$ with \eqref{eq:star} is uniformly bounded for subspaces $D$ and $E$ in $\CC^m \ts \CC^n$.

In section 4, we give an upper bound $k^2+l^2$ for generic $D$ and $E$ with $\dim D^{\perp}=k, \dim E^{\perp}=l$ when $m=2$ and $k+l = n$. By some known examples, we expect this upper bound could be sharp. As far as we know, there is no example the number of product vectors is bigger than $k^2+l^2$.

\begin{ques}
Let $D$ and $E$ be subspaces of $\CC^2 \ts \CC^n$ with $\dim D^{\perp}=k, \dim E^{\perp}=l$. We assume $k+l \geq n$. Is the number of product vectors $\xy$ with \eqref{eq:star} always at most $k^2+l^2$ when it is finite?
\end{ques}

We consider the following: If the number of linearly independent product vectors is smaller than either the rank of $\rho$ or that of $\rho^{\Gamma}$, $\rho$ cannot be separable by the range criterion.

As its applications, we give an example that the number of different product vectors is less than the dimension of $D$ although the upper bound $k^2+l^2$ is bigger than the dimension of $D$.

\bibliographystyle{abbrv}
\bibliography{Quantum}

\end{document}